\newtheorem{theo}{Theorem}[section]
\newtheorem{defi}[theo]{Definition}
\newcounter{listagem}
\newcommand{\blista}{\begin{list}{\roman{listagem})}{\usecounter{listagem}}}
\newcommand{\elista}{\end{list}}
\newcommand{\BR}{{\mathbb R}}
\newcommand{\BC}{{\mathbb C}}
\newcommand{\tr}{\mathop{\rm tr}}
\newcommand{\beq}{\begin{equation}}
\newcommand{\eeq}{\end{equation}}
\newcommand{\beqn}{\begin{eqnarray}}
\newcommand{\eeqn}{\end{eqnarray}}
\newcommand{\ov}{\overline}
\newcommand{\Cl}{{C \kern -0.1em \ell}}
\newcommand{\f}{\mathfrak{f}}
\newcommand{\fm}{\mathfrak{f}^{\dagger}}
\newcommand{\ds}{\displaystyle}
\begin{document}

\title{Factorization of the Non-Stationary Schr\"odinger Operator}

\author{Paula Cerejeiras ~~~~ Nelson Vieira\\
{\small Department of Mathematics,}\\{\small University of Aveiro,}\\{\small 3810-193 Aveiro, Portugal.} \\
{\small E-mails: pceres@mat.ua.pt, ~nvieira@mat.ua.pt}} \maketitle

\begin{abstract}
We consider a factorization of the non-stationary Schr\"odin\-ger
ope\-rator based on the parabolic Dirac operator introduced by
Cerejeiras/ K\"ahler/ Sommen. Based on the fundamental solution for
the pa\-rabolic Dirac operators, we shall construct appropriated
Teodorescu and Cauchy-Bitsadze operators. Afterwards we will
describe how to solve the nonlinear Schr\"odinger equation using
Banach fixed point theorem. \\

{\bf Keywords:}Nonlinear PDE's, Parabolic Dirac operators, Iterative
Methods

{\bf MSC 2000:} Primary: 30G35; Secundary: 35A08, 15A66.

\end{abstract}

\maketitle
\section{Introduction}

Time evolution problems are of extreme importance in mathematical
physics. However, there is still a need for special techniques to
deal with these problems, specially when non-linearities are
involved.

For stationary problems, the theory developed by K. G\"urlebeck and
W. Spr\"o\ss ig \cite{GS}, based on an orthogonal decomposition of
the underlying function space in terms of the subspace of
null-solutions of the corresponding Dirac operator, has been
successfully applied to a wide range of equations, for instance
Lam\'e, Navier-Stokes, Maxwell or Schr\"odinger equations \cite{CK},
\cite{GS}, \cite{KK}, \cite{D} or \cite{ShapKrav}. Unfortunately, there is no easy way to extend
this theory directly to non-stationary problems.

In \cite{CKS} the authors proposed an alternative approach in terms
of a Witt basis. This approach allowed a successful application of
the already existent techniques of elliptic function theory (see
\cite{GS}, \cite{CK}) to non-stationary problems in time-varying
domains. Namely, a suitable orthogonal decomposition for the
underlying function space was obtained in terms of the kernel of the
parabolic Dirac operator and its range after application to a
 Sobolev space with zero boundary-values.

In this paper we wish to apply this approach to study the existence
and uniqueness of solutions of the non-stationary nonlinear
Schr\"odinger equation.

Initially, in section two, we will present some basic notions about
complexified Clifford algebras and Witt basis. In section three we
will present a factorization for the operators $(\pm i \partial_t
-\Delta )$ using an extension of the parabolic Dirac operator
introduced in \cite{CKS}. For the particular case of the
non-stationary Schr\"odinger operator we will present the
corresponding Teodorescu and Cauchy-Bitsadze operators  in analogy
to  \cite{GS}. Moreover, we will obtain some direct results about
the decomposition of $L_p$-spaces  and the resolution of the linear
Schr\"odinger problem.

In the last section we will present an algorithm to solve
numerically the non-linear Schr\"odinger and we prove its
convergence in $L_2$-sense using Banach's fixed point theorem.


\section{Preliminaries}

We consider the $m$-dimensional vector space $\BR^m$ endowed with an
orthonormal basis $\{ e_1, \cdots, e_m\}.$

We define the universal Clifford algebra $\Cl_{0,m}$ as the
$2^{m}$-dimensional associative algebra which preserves the
multiplication rules $e_i e_j + e_j e_i = -2\delta_{i,j}.$ A basis
for $\Cl_{0,m}$ is given by $ e_{0} = 1$ and $ e_A = e_{h_1} \cdots
e_{h_k},$ where  $A = \{ h_1, \ldots, h_k \} \subset M = \{ 1,
\ldots, m \}$, for $1 \leq h_1 < \cdots < h_k \leq m$. Each element
$x \in \Cl_{0,m}$ will be represented by $x=\sum_{A} x_A e_A,$ $x_A
\in \BR,$ and each non-zero vector $x = \sum_{j=1}^m x_j e_j\in
\BR^{m}$ has a multiplicative inverse given by $\frac{-x}{|x|^2}$.
We denote by $\ov{x}^{\Cl_{0,m}}$ the (Clifford) conjugate of the
element $x \in \Cl_{0,m},$ where
    \begin{eqnarray*}
    \ov{1}^{\Cl_{0,m}}=1,  ~ \ov{e_j}^{\Cl_{0,m}} = -e_j,  ~ \ov{ab}^{\Cl_{0,m}}=\ov{b}^{\Cl_{0,m}}\ov{a}^{\Cl_{0,m}}.
    \end{eqnarray*}

We introduce the complexified Clifford algebra $\Cl_m$ as the
tensorial product
    \begin{eqnarray*}
    \mathbb{C} \otimes \Cl_{0,m} = \left \{ w=\sum_{A} z_A e_A , ~ z_A \in
    \mathbb{C}, A \subset M \right \}
    \end{eqnarray*}
where the imaginary unit interacts with the basis elements via $i
e_j = e_j i, j= 1, \ldots, m.$ The conjugation in
$\Cl_m = \mathbb{C} \otimes \Cl_{0,m}$ will be defined as
$\ov{w} = \sum_{A} \ov{z_A}^{\mathbb{C}} \ov{e_A}^{\Cl_{0,m}}.$ Let
us remark that for $a,b \in \Cl_m$ we have $|ab| \leq 2^m |a||b|$.

We introduce the Dirac operator $D=\sum_{j=1}^{m} e_j
\partial_{x_i}$. It factorizes the $m$-di\-men\-sional Laplacian, that
is, $D^2=-\Delta$. A $\Cl_m$-valued function defined on an open
domain $\underline \Omega,$ $u:\underline \Omega \subset \BR^m
\mapsto\Cl_m,$ is said to be {\it left-monogenic} if it satisfies
$Du=0$ on $\underline \Omega$ (resp. \textit{right-monogenic} if it
satisfies $uD=0$ on $\underline \Omega$).

A function $u:\underline \Omega \mapsto \Cl_{m}$ has a
representation $u=\sum_A u_A e_A$ with $\BC $-valued components
$u_A$. Properties such as continuity will be understood
component-wisely. In the following we will use the short notation
$L_p(\underline \Omega)$, $C^k(\underline \Omega)$, etc., instead of
$L_p(\underline \Omega,\Cl_m)$, $C^k(\underline \Omega,\Cl_m)$. For
more details on Clifford analysis, see \cite{DSS}, \cite{McIM}, \cite{BDSVA} or \cite{GM}.

Taking into account \cite{CKS} we will imbed $\mathbb{R}^m$ into
$\mathbb{R}^{m+2}$. For that purpose we add two new basis elements
$\f$ and $\fm$ satisfying
    $$\begin{array}{c}
         {\f}^2 = {\fm}^2 = 0, ~~  \f \fm + \fm \f = 1, ~~
         \f e_j +e_j \f =  \fm e_j +e_j \fm = 0, j=1,\cdots,m.
    \end{array}$$
This construction will allows us to use a suitable factorization
of the time evolution operators where only partial derivatives are
used.


\section{Factorization of time-evolution operators}

In this section we will study the forward/backward Schr\"odinger
equations,
    \begin{equation}
   (\pm i \partial_t -\Delta ) u(x,t)=0, ~~(x,t) \in \Omega,
   \label{Eq:1}
    \end{equation} where $\Omega \subset \mathbb{R}^m \times \mathbb{R}^+$, $m \geq
3,$ stands for an open domain in $\BR^m \times \BR^+$. We remark at
this point that $\Omega$ is a time-variating domain and, therefore,
not necessarily a cylindric domain.

Taking account the ideas presented in \cite{B} and \cite{CKS} we
introduce the following definition
\begin{defi}
For a function $u \in W_p^1(\Omega), ~1<p<+\infty,$ we define the
forward (resp. backward) parabolic Dirac operator
    \begin{eqnarray}
    D_{x,\pm i t}u & = & (D + \f \partial_t \pm i \fm)u, \label{I}
    \end{eqnarray}
where $D$ stands for the (spatial) Dirac operator.
\end{defi}

It is obvious that $D_{x,\pm it}:W_p^1(\Omega)\rightarrow
L_p(\Omega)$.

These operators factorize the correspondent time-evolution operator
 (\ref{Eq:1}), that is
    \begin{eqnarray}
    (D_{x, \pm it})^2 u = (\pm i \partial_t-\Delta) u. \label{II}
    \end{eqnarray}

Moreover, we consider the generic Stokes' Theorem
\begin{theo}\label{Th:22}
For each $u, v \in W_p^1(\Omega),$ $1<p<\infty,$ it holds
    \begin{eqnarray*}
    \int_{\Omega} v d \sigma_{x,t} u & = & \int_{\partial \Omega}  [(v D_{x, -it}) u + v ( D_{x, +it} u )]
    dx dt
    \end{eqnarray*}
where the surface element is $d \sigma _{x,t} = (D_x + \f
\partial_t) \rfloor dxdt,$ the contraction of the homogeneous
operator associated to $D_{x, -it}$ with the volume element.
\end{theo}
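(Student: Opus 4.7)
\begin{prova}[Proof sketch]
The plan is to recognize the integrand on the right--hand side as the exterior derivative of a Clifford--valued $m$--form on the space--time domain $\Omega \subset \BR^m \times \BR^+$, and then invoke the classical Stokes' theorem in $\BR^{m+1}$.

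First I would expand the integrand using the definition (\ref{I}). Writing $vD = \sum_j (\partial_{x_j} v) e_j$ for the right action of the Dirac operator, one has
\begin{eqnarray*}
(vD_{x,-it})u + v(D_{x,+it}u) & = & \bigl[(vD)u + v(Du)\bigr] + \bigl[(\partial_t v)\f u + v\f(\partial_t u)\bigr] \\
& & + \bigl[ -i v\fm u + i v\fm u \bigr].
\end{eqnarray*}
The crucial observation is that the $\fm$--terms cancel (since $i$ commutes with the basis elements in $\Cl_m$), while the remaining pairs combine, by the Leibniz rule, into
$(vD)u + v(Du) + \partial_t(v\f u)$.

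Next I would identify the surface element $d\sigma_{x,t}$ explicitly. From the contraction prescription, treating $D_x + \f\partial_t$ as a Clifford--valued vector field,
\begin{eqnarray*}
d\sigma_{x,t} \;=\; \sum_{j=1}^m (-1)^{j-1} e_j \, dx_1 \wedge \cdots \widehat{dx_j} \cdots \wedge dx_m \wedge dt \;+\; (-1)^m \f \, dx_1 \wedge \cdots \wedge dx_m.
\end{eqnarray*}
A direct computation of $d(v \, d\sigma_{x,t}\, u)$ term by term then yields, for the spatial summands, $[(vD)u + v(Du)]\,dx\,dt$ (this reproduces the classical Stokes identity for the Dirac operator, see \cite{GS}), while the temporal summand contributes $\partial_t(v\f u)\,dx\,dt$ after rearranging $dt \wedge dx_1 \wedge \cdots \wedge dx_m$ with the sign $(-1)^m$ that cancels the $(-1)^m$ in $d\sigma_{x,t}$. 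Combining, one obtains the key identity
\begin{eqnarray*}
d\bigl(v \, d\sigma_{x,t} \, u\bigr) \;=\; \bigl[(vD_{x,-it})u + v(D_{x,+it}u)\bigr]\, dx\, dt.
\end{eqnarray*}

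Finally, applying the classical Stokes' theorem for Clifford--valued $m$--forms on $\Omega \subset \BR^{m+1}$ (which is valid for $W_p^1$--regular coefficients by a standard density argument with mollification and a passage to the limit in $L_p$) gives the claimed identity. The main technical obstacle is really the bookkeeping of signs and of the non--commutativity of $\f$ with the spatial basis vectors; however, since $\f$ only appears in the temporal summand of $d\sigma_{x,t}$ and the $\fm$ contributions cancel before any rearrangement is needed, no anticommutation has to be invoked inside the volume integral, and the argument reduces cleanly to the classical Dirac Stokes formula plus the fundamental theorem of calculus in $t$.
\end{prova}
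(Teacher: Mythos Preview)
The paper does not actually supply a proof of this theorem; it is stated as the ``generic Stokes' Theorem'' without argument, presumably importing it from \cite{CKS}. So there is no paper proof to compare against, and your sketch is filling a genuine gap rather than duplicating anything in the text.

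Your argument is correct. The decisive observation you isolate---that the $\pm i\fm$ contributions from $vD_{x,-it}$ and $D_{x,+it}u$ cancel exactly, leaving $[(vD)u+v(Du)]+\partial_t(v\f u)$---is precisely what reduces the parabolic identity to the classical Dirac Stokes formula in the spatial variables together with the fundamental theorem of calculus in $t$. Your explicit description of $d\sigma_{x,t}$ and the sign bookkeeping in $d(v\,d\sigma_{x,t}\,u)$ check out, and the density/mollification remark for $W_p^1$ data is the standard way to justify the passage to the limit.

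One remark worth making explicit in your write-up: the statement as printed has the two domains of integration interchanged (the surface form $d\sigma_{x,t}$ should appear in the $\partial\Omega$ integral and the volume form $dx\,dt$ in the $\Omega$ integral). Your proof is written for the corrected version, so you should flag this typo.
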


We now construct the fundamental solution for the time-evolution
operator $-\Delta - i \partial_t$. For that purpose, we consider the
fundamental solution of the heat operator
    \begin{eqnarray}
    e(x,t) & = & \frac{H(t)}{(4 \pi t)^\frac{m}{2}} \exp \left(
    {-\frac{|x|^2}{4t}} \right), \label{Form:1}
    \end{eqnarray}
where $H(t)$ denotes the Heaviside-function. Let us remark that the
previous fundamental solution verifies
    \begin{eqnarray*}
    (-\Delta + \partial_t) e(x,t) & = & \delta(x) \delta(t).
    \end{eqnarray*}

We apply to (\ref{Form:1}) the rotation $t \rightarrow it$. There we
obtain
$$
    (-\Delta - i \partial_t) e(x,it) = -\Delta  e(x,it) +
    \partial_{it} e(x,it) =   \delta(x) \delta(it) = -i \delta(x) \delta(t),$$
i.e., the fundamental solution for the Schr\"odinger operator
$-\Delta - i \partial_t$ is
    \begin{eqnarray}
    e_-(x,t) & = & i e(x,it) \nonumber \\
    & = & i ~ \frac{H(t)}{(4 \pi i t)^\frac{m}{2}} \exp \left(i ~{\frac{|x|^2}{4t}}
    \right). \label{Form:2}
    \end{eqnarray}

Then we have
\begin{defi}
    Given the fundamental solution $e_-=e_-(x,t)$ we have as fundamental solution $E_-=E_-(x,t)$ for
    the parabolic Dirac operator $D_{x,-it}$ the function
    \begin{eqnarray}
    E_-(x,t) & = &  e_-(x,t) D_{x,-it} \nonumber \\
    & = & \frac{H(t)}{(4\pi i t)^{\frac{m}{2}}} \exp{\left(\frac{i|x|^2}{4t} \right)}
    \left(\frac{-x}{2t} + \f \left(  \frac{|x|^2}{4t^2} - \frac{im}{2t} \right) + \fm
    \right) \label{SFD}
    \end{eqnarray}
\end{defi}

If we replace the function $v$ by the fundamental solution $E_-$ in
the generic Stoke's formula presented before, we have, for a
function $u \in W_p^1(\Omega)$ and a point $(x_0,t_0) \notin
\partial \Omega$, the Borel-Pompeiu formula,
    \begin{gather}
    \int_{\partial \Omega} E_-(x-x_0,t-t_0) d\sigma_{x,t} \,  u(x,t) \nonumber \\
    = u(x_0,t_0) + \int_{\Omega} E_-(x-x_0,t-t_0) (D_{x,+it}u) dx
    dt.     \label{III}
    \end{gather}

Moreover, if $u \in ker(D_{x,+it})$ we obtain the Cauchy's integral
formula
    $$\int_{\partial \Omega} E_-(x-x_0,t-t_0) d\sigma_{x,t} \, \, u(x,t) =
     u(x_0,t_0).$$

Based on expression (\ref{III}) we define the Teodo\-rescu and
Cauchy-Bitsadze operators.

\begin{defi} For a function $u \in L_p(\Omega)$ we have

(a) the Teodorescu operator
    \begin{eqnarray}
    T_- u(x_0,t_0) & = & \int_{\Omega} E_-(x-x_0,t-t_0)u(x,t)dx dt \label{genT}
    \end{eqnarray}

(b) the Cauchy-Bitsadze operator
    \begin{eqnarray}
        F_- u(x_0,t_0) & = &
    \int_{\partial \Omega} E_-(x-x_0,t-t_0)d\sigma_{x,t}      u(x,t), \label{genF}
    \end{eqnarray}
for $(x_0,t_0) \notin \partial \Omega.$
\end{defi}

Using the previous operators, (\ref{III}) can be rewritten as
    $$F_- u = u +  T_-D_{x,+it}u,$$
whenever $v \in W_p^1(\Omega),~1<p<\infty.$

Moreover, the Teodurescu operator is the right inverse of the
parabolic Dirac operator $D_{x,-it)}$, that is,
\begin{eqnarray*}
D_{x,-it} T u & = & \int_{\Omega} D_{x,-it}
E_-(x-x_0,t-t_0)u(x,t)dx dt \\
& = & \int_{\Omega} \delta(x-x_0,t-t_0)u(x,t)dx dt \\
& = & u(x_0,t_0),
\end{eqnarray*} for all $ (x_0,t_0) \in \Omega.$

In view of the previous definitions and relations, we obtain the following  results, in an
analogous way as in \cite{CKS}.

\begin{theo} If $v \in W_p^{\frac{1}{2}}(\partial \Omega )$ then the trace of the
operator $F_-$ is
    \begin{eqnarray}
    \tr (F_-v) & = & \frac{1}{2}v - \frac{1}{2}S_-v, \label{S_-}
    \end{eqnarray}
where
    \begin{eqnarray*}
    S_-v(x_0,t_0) & = &     \int_{\partial {\Omega} } E_-(x-x_0,t-t_0) d\sigma_{x,t}
    v(x,t)
    \end{eqnarray*}
is a generalization of the Hilbert transform.
\end{theo}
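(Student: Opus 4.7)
The plan is to prove (\ref{S_-}) as a Plemelj--Sokhotski jump formula for the parabolic Cauchy--Bitsadze operator $F_-$, mirroring the elliptic argument from \cite{GS} and the heat-operator analogue in \cite{CKS}, transported to the Schrödinger kernel (\ref{SFD}) via the Wick rotation $t \mapsto it$ already used in (\ref{Form:2}).

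First I would reduce to smooth data. Since $W_p^{1/2}(\partial\Omega)$ is precisely the trace space of $W_p^1(\Omega)$, smooth functions are dense, and both $F_-$ and the principal-value operator $S_-$ extend continuously to the appropriate topologies (the latter via standard singular-integral estimates, the former via the mapping properties of the parabolic Teodorescu and Cauchy--Bitsadze operators established in \cite{CKS}), it suffices to establish the identity for smooth $v$.

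For smooth $v$, fix $p^{*}=(x_0^{*},t_0^{*})\in\partial\Omega$ and a non-tangential approach $p^{k}=(x_0^{k},t_0^{k})\in\Omega$ with $p^{k}\to p^{*}$. For $\epsilon>0$, excise a small parabolic cap $\Sigma_\epsilon\subset\partial\Omega$ around $p^{*}$ and close the surface inside $\Omega$ by an auxiliary cap $C_\epsilon$ adapted to the scaling $|x|^{2}\sim t$. Applying the Borel--Pompeiu formula (\ref{III}) on the modified domain and letting first $k\to\infty$ and then $\epsilon\to 0$, the integral over $\partial\Omega\setminus\Sigma_\epsilon$ converges to the principal value $S_-v(p^{*})$, while the integral over $C_\epsilon$ isolates the singularity of $E_-$ at the origin and contributes $\tfrac{1}{2}v(p^{*})$; the sign in front of $S_-v$ in (\ref{S_-}) is then fixed by the orientation convention of $d\sigma_{x,t}$ on $\partial\Omega$.

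The main obstacle is the explicit evaluation of the local contribution from $C_\epsilon$. In the elliptic case the Dirac fundamental solution has a homogeneous spatial singularity and the $\tfrac{1}{2}$ factor drops out of a solid-angle computation on a half-sphere. The kernel (\ref{SFD}) is instead causal (because of $H(t)$), anisotropic, and oscillatory; its dominant piece near $(0,0)$ behaves like $\tfrac{H(t)}{(4\pi i t)^{m/2}}\,e^{i|x|^{2}/(4t)}\,(-x/(2t))$. One must therefore select $C_\epsilon$ in accordance with the parabolic scaling, control the oscillatory phase $e^{i|x|^{2}/(4t)}$ as $\epsilon\to 0$, and check that the $\f$- and $\fm$-components contributed by $D_{x,-it}$ vanish in the limit while the scalar part yields exactly $\tfrac{1}{2}v(p^{*})$. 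Once this local calculation is in place, the identity (\ref{S_-}) follows exactly as in \cite{CKS}.
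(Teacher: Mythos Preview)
The paper does not supply a proof of this theorem at all; it merely states that the result is obtained ``in an analogous way as in \cite{CKS}''. Your proposal is precisely that analogous Plemelj--Sokhotski argument spelled out (density reduction, excision of a parabolic cap, local evaluation of the singular contribution), transported from the heat kernel of \cite{CKS} to the Schr\"odinger kernel via the same Wick rotation the paper already uses in (\ref{Form:2}), so your approach coincides with what the paper intends.
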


Also, the operator $S_-$ satisfies $S_-^2=I$ and, therefore, the
operators $$\mathbf{P} =\frac{1}{2} I + \frac{1}{2}S_-, ~~\mathbf{Q}
=\frac{1}{2} I - \frac{1}{2}S_-$$ are projections into the Hardy
spaces.

Taking account the ideas presented in \cite{CKS} an immediate
application is given by the decomposition of the $L_p-$space.
\begin{theo}
The space $L_p(\Omega)$, for $1<p \leq 2$, allows the following
decomposition
    $$L_p(\Omega)  =  L_p(\Omega) \cap \textrm{ker} \left( D_{x,- it}\right) \oplus
    D_{x, it} \left( \stackrel{\circ}{W}_p^1
    (\Omega)\right),$$
and we can define the following projectors
    \begin{eqnarray*}
    P_-: & L_p(\Omega) & \rightarrow L_p(\Omega) \cap \textrm{ker}\left(
    D_{x,- it}\right) \\
    Q_- : & L_p(\Omega) & \rightarrow D_{x,- it} \left(
    \stackrel{\circ}{W}_p^1(\Omega) \right).
    \end{eqnarray*}
\end{theo}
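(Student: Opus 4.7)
The plan is to mirror the orthogonal-decomposition strategy of G\"urlebeck-Spr\"ossig for elliptic Dirac operators, in the parabolic adaptation of \cite{CKS}. The argument splits into three ingredients: the two summands have trivial intersection, an arbitrary $u\in L_p(\Omega)$ admits an explicit decomposition, and the induced projectors are bounded on $L_p(\Omega)$.

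First, I would verify $\ker(D_{x,-it})\cap D_{x,it}\bigl(\stackrel{\circ}{W}_p^1(\Omega)\bigr)=\{0\}$. Suppose $h=D_{x,it}\phi$ with $h\in\ker(D_{x,-it})$ and $\phi\in\stackrel{\circ}{W}_p^1(\Omega)$. Apply the Stokes formula of Theorem \ref{Th:22} to the pair $(\bar h,\phi)$, the bar being the Clifford conjugation that turns the left-null equation $D_{x,-it}h=0$ into the right-null equation $\bar h\,D_{x,-it}=0$. Since $\phi$ has vanishing trace on $\partial\Omega$, the boundary term drops and one is left with
\[
0=\int_\Omega (\bar h D_{x,-it})\phi\,dx\,dt+\int_\Omega \bar h(D_{x,it}\phi)\,dx\,dt=\int_\Omega \bar h\,h\,dx\,dt,
\]
forcing $h=0$ for $p=2$. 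For $1<p<2$ one extends the conclusion through $L_p$-$L_{p'}$ duality with $p'\ge 2$, which is precisely the step that restricts the admissible range of $p$.

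Second, I would construct the decomposition with a Teodorescu-type right inverse of $D_{x,it}$. Let $T_+$ be built analogously to (\ref{genT}) from the fundamental solution of $D_{x,it}$, so that $D_{x,it}T_+=I$ on $L_p(\Omega)$. For $u\in L_p(\Omega)$ set $w:=T_+u\in W_p^1(\Omega)$; generically $w|_{\partial\Omega}\neq 0$. Correct $w$ by solving the auxiliary Dirichlet problem
\[
D_{x,-it}D_{x,it}\,\widetilde w=0 \text{ in } \Omega,\qquad \widetilde w|_{\partial\Omega}=\tr(w),
\]
and set $\phi:=w-\widetilde w\in \stackrel{\circ}{W}_p^1(\Omega)$. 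Then
\[
u=D_{x,it}\phi+D_{x,it}\widetilde w,
\]
where the second summand lies in $\ker(D_{x,-it})$ (since $D_{x,-it}(D_{x,it}\widetilde w)=0$) and the first in $D_{x,it}\bigl(\stackrel{\circ}{W}_p^1(\Omega)\bigr)$. The projectors are $Q_-u:=D_{x,it}\phi$ and $P_-u:=D_{x,it}\widetilde w$; their $L_p$-boundedness follows from that of $T_+$ together with that of the Dirichlet solver.

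The main obstacle is the well-posedness of the auxiliary Dirichlet problem on the non-cylindrical, time-varying domain $\Omega$. The operator $D_{x,-it}D_{x,it}$ has $-\Delta$ as its principal spatial symbol but carries mixed lower-order contributions from the Witt elements $\f,\fm$ (a short computation using $\f\fm+\fm\f=1$ shows the cross product does \emph{not} collapse to $\pm i\partial_t-\Delta$), so classical parabolic $L_p$ theory does not apply verbatim. One needs $L_p$ trace and extension theorems adapted to the time-varying geometry together with energy estimates for $D_{x,-it}D_{x,it}$, the same technical package that underlies the analogous result in \cite{CKS}.
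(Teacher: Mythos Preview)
Your strategy diverges from the paper's and also carries real gaps that the paper's route avoids.

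First, note that the statement as printed has a typo: the second summand (and the target of $Q_-$) is $D_{x,-it}\bigl(\stackrel{\circ}{W}_p^1(\Omega)\bigr)$, not $D_{x,+it}(\ldots)$; the paper's proof uses $D_{x,-it}$ throughout. Once this is read correctly, the paper exploits the factorization $D_{x,-it}^2=-\Delta-i\partial_t$ and works entirely with the Schr\"odinger Dirichlet solver $(-\Delta-i\partial_t)_0^{-1}$. Triviality of the intersection is immediate from uniqueness of that Dirichlet problem: if $u=D_{x,-it}v$ with $v\in\stackrel{\circ}{W}_p^1(\Omega)$ and $D_{x,-it}u=0$, then $(-\Delta-i\partial_t)v=0$ with zero boundary data, hence $v=0$ and $u=0$. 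For the splitting one simply sets $u_2:=D_{x,-it}(-\Delta-i\partial_t)_0^{-1}D_{x,-it}u\in D_{x,-it}\bigl(\stackrel{\circ}{W}_p^1(\Omega)\bigr)$ and checks by direct computation that $u_1:=u-u_2$ is annihilated by $D_{x,-it}$. No Stokes identity, no Teodorescu correction, no mixed second-order operator is needed.

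Your approach, by contrast, takes the summand to be $D_{x,+it}\bigl(\stackrel{\circ}{W}_p^1\bigr)$ and is then forced into a Dirichlet problem for the cross product $D_{x,-it}D_{x,+it}$; you yourself observe that this does \emph{not} collapse to $\pm i\partial_t-\Delta$ and that classical parabolic theory does not apply. That is the core of the construction, and you leave it open. Separately, your intersection argument hinges on the claim that $D_{x,-it}h=0$ implies $\bar h\,D_{x,-it}=0$. With the Witt elements present this is false in general: the conjugation defined in the paper swaps $\f$ and $\fm$ (up to sign), so $\overline{D_{x,-it}}$ is neither $\pm D_{x,-it}$ nor $\pm D_{x,+it}$, and the Stokes pairing of Theorem~\ref{Th:22} does not yield $\int_\Omega \bar h\,h=0$ as you wrote. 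Both of your main steps are therefore incomplete, whereas the paper sidesteps these issues entirely by squaring the single operator $D_{x,-it}$ and invoking the scalar Schr\"odinger Dirichlet problem.
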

\begin{proof}
Let us denote by $(-\Delta -i\partial_t)_0^{-1}$ the solution
operator of the problem
    $$\left \{
    \begin{array}{rcl}
    \ds (- \Delta - i \partial_t)u & = & f ~~  in \, \, \, \Omega \\
    & & \\
    u & = & 0 ~~  on \, \, \, \partial \Omega
    \end{array}
    \right. $$

As a first step we take a look at the intersection of the two
subspaces $D_{x,- it} \left( \stackrel{\circ}{W}_p^1
(\Omega)\right)$ and $L_p(\Omega) \cap \textrm{ker} \left( D_{x,-
it}\right)$.

Consider $u \in ~L_p(\Omega) \cap \textrm{ker} \left( D_{x,-
it}\right) \cap D_{x,- it} \left(
\stackrel{\circ}{W}_p^1(\Omega)\right)$. It is immediate that
$D_{x,- it}u=0$ and also, because $u \in
D_{x,-it}\left(\stackrel{\circ}{W}_p^1(\Omega)\right)$, there exist
a function $v \in \stackrel{\circ}{W}_p^1(\Omega)$ with $D_{x,- it}
v = u$ and $(-\Delta - i\partial_t)v = 0$.

Since $(-\Delta -i\partial_t)_0^{-1}f$ is unique (see \cite{V}) we
get $v=0$ and, consequently, $u=0$, i. e., the intersection of this
subspaces contains only the zero function. Therefore, our sum is a
direct sum.

Now let us $u \in L_p(\Omega)$. Then we have
$$u_2 = D_{x,-it}
(-\Delta -i\partial_t)^{-1}_0D_{x,-it}u ~ \in ~ D_{x,- it} \left(
\stackrel{\circ}{W}_p^1(\Omega)\right).$$

Let us now apply $D_{x,-it}$ to the function $u_1=u-u_2$. This
results in
    \begin{eqnarray*}
    D_{x,-it} u_1 & = & D_{x,-it}u - D_{x,-it}u_2 \\
    & = & D_{x,-it}u - D_{x,-it}D_{x,-it}
    (-\Delta -i\partial_t)^{-1}_0D_{x,-it}u \\
    & = & D_{x,-it}u - (-\Delta - i\partial_t)
    (-\Delta -i\partial_t)^{-1}_0D_{x,-it}u \\
    & = & D_{x,-it}u - D_{x,-it}u \\
    & = & 0,
    \end{eqnarray*}
i.e., $D_{x,-it} u_1 \in \textrm{ker} \left( D_{x,- it}\right)$.
Because $u \in L_p(\Omega)$ was arbitrary chosen our decomposition
is a decomposition of the space $L_p(\Omega)$.
\end{proof}

In a similar way we can obtain a decomposition of the $L_p(\Omega)$
space in terms of the parabolic Dirac operator $ D_{x,+it}.$
Moreover, let us remark that the above decompositions are orthogonal
in the case of $p=2.$

Using the previous definitions we can also present an immediate
application in the resolution of the linear Schr\"odinger problem
with homogeneous boundary data.

\begin{theo} Let $f \in L_p(\Omega), ~1<p \leq 2.$ The solution of the problem
    $$\left \{
    \begin{array}{rcl}
 (- \Delta -i \partial_t)u & = & f ~~  in ~~~ \Omega \\
    & & \\
    u & = & 0 ~~  on ~~~ \partial \Omega
    \end{array}
    \right. $$
is given by $u=T_-Q_-T_- f.$
\end{theo}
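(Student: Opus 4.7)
The plan is to verify that $u=T_-Q_-T_-f$ satisfies both the Schr\"odinger equation and the homogeneous boundary condition by combining the factorization $(D_{x,-it})^2 = -\Delta - i\partial_t$, the right-inverse property $D_{x,-it}T_- = I$, and the direct-sum decomposition of $L_p(\Omega)$ from the preceding theorem. The PDE check is a straightforward two-step differentiation; the boundary condition is the subtle part and will be handled by invoking the uniqueness of the homogeneous Dirichlet problem cited from \cite{V}.

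I would begin by applying $D_{x,-it}$ to $u$. By the right-inverse identity, $D_{x,-it}u = Q_-T_-f$. Splitting $T_-f = P_-T_-f + Q_-T_-f$ and using $P_-T_-f \in \ker(D_{x,-it})$, a second application gives
\begin{equation*}
D_{x,-it}(Q_-T_-f) \;=\; D_{x,-it}(T_-f) - D_{x,-it}(P_-T_-f) \;=\; f - 0 \;=\; f.
\end{equation*}
Hence $(D_{x,-it})^2 u = f$, and the factorization yields $(-\Delta - i\partial_t)u = f$ in $\Omega$.

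For the homogeneous boundary data, I would use that $Q_-T_-f$ lies in $D_{x,-it}(\stackrel{\circ}{W}_p^1(\Omega))$, so there is a unique $w \in \stackrel{\circ}{W}_p^1(\Omega)$ with $D_{x,-it}w = Q_-T_-f$. Then $w$ itself satisfies $(-\Delta - i\partial_t)w = D_{x,-it}(Q_-T_-f) = f$ and has zero trace, so by the uniqueness result cited from \cite{V} it coincides with the output of $(-\Delta - i\partial_t)^{-1}_0 f$. Since $u$ and $w$ share the same image under $D_{x,-it}$, the difference $u-w$ lies in $\ker(D_{x,-it})$; combining this with the direct-sum decomposition and the uniqueness above forces $u=w \in \stackrel{\circ}{W}_p^1(\Omega)$, and so $u$ vanishes on $\partial \Omega$.

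The main obstacle is precisely the final identification $u=w$, equivalently the identity $T_-D_{x,-it}w = w$ for $w \in \stackrel{\circ}{W}_p^1(\Omega)$. This is not an immediate corollary of the Borel-Pompeiu formula (III), which reads $F_-u = u + T_-D_{x,+it}u$ and mixes $D_{x,+it}$ with $D_{x,-it}$ through the zeroth-order correction $D_{x,-it}-D_{x,+it} = -2i\fm$. Routing the argument through the uniqueness of the homogeneous BVP, as above, avoids having to track this asymmetric correction explicitly and is the cleanest way to close the proof.
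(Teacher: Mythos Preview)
Your verification of the PDE --- applying $D_{x,-it}$ twice and using $D_{x,-it}T_- = I$ together with $P_-T_-f \in \ker(D_{x,-it})$ --- is exactly the paper's argument, which in fact consists of the single displayed line
\[
D_{x,-it}^2 u \;=\; D_{x,-it}(Q_-T_-f) \;=\; D_{x,-it}(T_-f) \;=\; f,
\]
and nothing more. The paper does not verify the boundary condition explicitly; it treats it as implicit in the definition of $Q_-$.

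Your additional discussion of the boundary data therefore goes beyond what the paper does. You correctly locate the real difficulty in the $\pm it$ asymmetry of the Borel--Pompeiu formula, but your proposed resolution does not quite close: from $u-w \in \ker(D_{x,-it})$ and the uniqueness of the Dirichlet problem you cannot yet conclude $u=w$, because nothing in your argument places $u-w$ in the complementary summand $D_{x,-it}\bigl(\stackrel{\circ}{W}_p^1(\Omega)\bigr)$, nor shows that $u$ itself has zero trace (which is what would be needed to invoke uniqueness directly). This is, however, a gap that the paper's own proof simply leaves unaddressed rather than one you have introduced.
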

\begin{proof}
The proof of this theorem is based on the properties of the operator
$T_-$ and of the projector $Q_-$. Because $T_-$ is the right inverse
of $D_{x,-it}, $ we get
    $$D_{x,-it}^2 u = D_{x,-it} (Q_-T_- f) = D_{x,-it} (T_- f) =
    f.$$
\end{proof}


\section{The Non-Linear Schr\"odinger Problem}

In this section we will construct an iterative method for the
non-linear Schr\"odinger equation and we study is convergence. As
usual, we consider the $L_2-$norm
    $$||f||^2 = \int_\Omega [f \overline f]_0  dxdt,$$
where $[\cdot ]_0$ denotes the scalar part.

Moreover, we also need the mixed Sobolev spaces
$W_p^{\alpha,\beta}(\Omega)$. For this we introduce the convention
    \begin{eqnarray*}
    \Omega^t = \left\{ x:~(x,t) \in \Omega \right\} & \subset &
    \BR^m \\
    \Omega^x = \left\{ t:~(x,t) \in \Omega \right\} & \subset &
    \BR^+.
    \end{eqnarray*}

Then, we say that
    \begin{eqnarray*}
    u \in W_p^{\alpha,\beta}(\Omega) & \mbox{\rm ~ iff ~} & \left\{ \begin{array}{cc}
                                               u(\cdot,t) \in W_p^\alpha(\Omega^t), & ~ \forall ~ t  \\
                                               & \\
                                               u(x,\cdot) \in W_p^\beta(\Omega^x), & ~ \forall ~  x
                                             \end{array}
                                             \right.
    \end{eqnarray*}

Under this conditions we will study the (generalized) non-linear Schr\"odinger
problem:
    \begin{eqnarray}
    -\Delta_x u -i \partial_t u + |u|^2 u & = & f   ~~ \mbox{\rm ~ in ~}     \Omega \label{Eq:2} \\
    u & = & 0    ~~ \mbox{\rm ~ on ~}    \partial \Omega , \nonumber
    \end{eqnarray}
where $|u|^2=\sum_A |u_A|^2.$ We can rewrite (\ref{Eq:2}) as
    \begin{eqnarray}
    D_{x,-it}^2 u + M(u) & = & 0, \label{2*}
    \end{eqnarray}
where $M(u)=|u|^2u-f$. It is easy to see that
    \begin{eqnarray}
    u & = & - T_-Q_-T_-(M(u)) \label{2*'}
    \end{eqnarray} is a solution of (\ref{2*}) by means of direct
    application of $D_{x,-it}^2$ to both sides of the equation.

We remark that for $u \in W^{2,1}_2(\Omega)$, we get
$$
    ||D_{x,-it}u|| =||Q_-T_-M(u)||  = ||T_-M(u)||. $$

We now prove that (\ref{2*'}) can be solved by the convergent
iterative method
    \begin{eqnarray}
    u_n & = & - T_-Q_-T_-(M(u_{n-1})). \label{3*}
    \end{eqnarray}

For that purpose we need to establish some norm estimations.
Initially, we have that
    \begin{eqnarray}
    ||u_n-u_{n-1}|| & = & ||T_-Q_-T_-[M(u_{n-1})-M(u_{n-2})]|| \nonumber \\
    & \leq &  C_1 ||M(u_{n-1})-M(u_{n-2})||, \label{4*}
    \end{eqnarray}
where $C_1=||T_-Q_-T_-||=||T_-||^2$.

We now estimate the factor $||M(u_{n-1})-M(u_{n-2})||.$ We get
    \begin{gather}
    ||M(u_{n-1})-M(u_{n-2})||= |||u_{n-1}|^2u_{n-1}-|u_{n-2}|^2u_{n-2}|| \nonumber \\
\leq |||u_{n-1}|^2(u_{n-1}-u_{n-2})|| + |||u_{n-1}-u_{n-2}|^2u_{n-2}|| \nonumber \\
 \leq 2^{m+1} ||u_{n-1}-u_{n-2}|| \left( ||u_{n-1}||^2 + ||u_{n-2}|| ||u_{n-1}-u_{n-2}|| \right), \nonumber
    \end{gather}

We assume $\mathcal{K}_n:=2^{m+1} \left( ||u_{n-1}||^2 + ||u_{n-2}||
||u_{n-1}-u_{n-2}|| \right)$ so that $$    ||u_{n} -u_{n-1}||
 \leq C_1 \mathcal{K}_n ||u_{n-1}-u_{n-2}||. $$

Moreover, we have additionally that
    \begin{eqnarray}
    ||u_n|| & = & ||T_-Q_-T_-M(u_{n-1})|| \nonumber \\
    & \leq & 2^{m+1}C_1||u_{n-1}||^3 + C_1||f|| \label{5*}
    \end{eqnarray} holds.

In order to prove that indeed we have a contraction we need to study
the auxiliary inequality
$$2^{m+1}C_1 ||u_{n-1}||^3 +C_1||f||  \leq ||u_{n-1}||, $$ that
    is,
    \begin{eqnarray}
||u_{n-1}||^3 -\frac{||u_{n-1}||}{2^{m+1}C_1} +\frac{||f||}{2^{m+1}}
& \leq & 0. \label{6*}
    \end{eqnarray}

The analysis of (\ref{6*}) will be made considering two cases

\textbf{Case I: } When $||u_{n-1}||\geq 1$, we can establish the
following
    inequality in relation to (\ref{6*})
        \begin{eqnarray*}
        ||u_{n-1}||^2 -\frac{||u_{n-1}||}{3\cdot 2^{m+1}} + \frac{||f||}{2^{m+1}} & \leq &
        ||u_{n-1}||^3 -\frac{||u_{n-1}||}{2^{m+1}C_1} +\frac{||f||}{2^{m+1}}.
        \end{eqnarray*}
    Then, from (\ref{6*}), we have
        \begin{gather}
       ||u_{n-1}||^2 -\frac{||u_{n-1}||}{3 \cdot 2^{m+1}} + \frac{||f||}{2^{m+1}}  \leq 0 \nonumber \\
         ||u_{n-1}||^2 -2\frac{||u_{n-1}||}{6 \cdot 2^{m+1}} + \frac{1}{36 \cdot 2^{2m+2}} + \frac{||f||}{2^{m+1}} - \frac{1}{36 \cdot 2^{2m+2}} \leq 0 \nonumber \\
        \left( ||u_{n-1}||  - \frac{1}{6 \cdot 2^{m+1}}   \right)^2  \leq \, \frac{1}{36 \cdot 2^{2(m+1)}}
        -\frac{||f||}{2^{m+1}} \nonumber \\
        = \frac{1}{2^{m+1}} \left(\frac{1}{36 \cdot 2^{m+1}}-||f|| \right).        \label{7*}
        \end{gather}

    If $||f|| \leq \frac{1}{36 \cdot 2^{m+1}}$ then
        \begin{eqnarray*}
        \left| ||u_{n-1}||  - \frac{1}{6 \cdot 2^{m+1}} \right| & \leq & W,
        \end{eqnarray*}
    where $W=\sqrt{\frac{1}{36 \cdot 2^{2(m+1)}}-\frac{||f||}{2^{m+1}}}$.

    In consequence, if
        \begin{eqnarray*}
        \frac{1}{6 \cdot 2^{m+1}}-W & \leq \,  ||u_{n-1}|| & \leq \,
        \frac{1}{6 \cdot 2^{m+1}}+W
        \end{eqnarray*} then we have from (\ref{5*}) the desired inequality $$||u_{n}|| \leq ||u_{n-1}||.$$

    Furthermore, we have now to study the remaining case. Assuming now that $||u_{n-1}|| \leq
    \frac{1}{6 \cdot 2^{m+1}}-W$, we have
 $$
        ||u_{n}||  \leq   2^{m+1}C_1 \left( \frac{1}{6 \cdot 2^{m+1}} - W
        \right)^3+ C_1 ||f||   \leq  \frac{1}{6 \cdot 2^{m+1}} - W$$
    and $        ||u_{n-1}|| \leq \frac{1}{6 \cdot 2^{m+1}} - W,$ $||u_{n-2}|| \leq \frac{1}{6 \cdot 2^{m+1}}
        -W $ so that it holds $$
        ||u_{n-1}-u_{n-2}|| \leq 2\left( \frac{1}{6 \cdot 2^{m+1}} - W
        \right).$$

    With the previous relations we can estimate the value of $\mathcal{K}_n$
        \begin{eqnarray}
        \mathcal{K}_n & = & 2^{m+1}\left( ||u_{n-1}||^2 + ||u_{n-2}|| ||u_{n-1}-u_{n-2}||
        \right) \nonumber \\
        & \leq & 2^{m+1} \left[ \left( \frac{1}{6 \cdot 2^{m+1}} - W
        \right)^2 + 2 \left( \frac{1}{6 \cdot 2^{m+1}} - W
        \right)^2 \right]\nonumber  \\
        & \leq & 3 \cdot 2^{m+1}\left( \frac{1}{6 \cdot 2^{m+1}} - W
        \right) \nonumber \\
        & = & \frac{1}{2} - 3 \cdot 2^{m+1} W  < \frac{1}{2}, \label{8*}
        \end{eqnarray}
    which implies that
        \begin{eqnarray*}
        ||u_{n-2}|| & \leq & R:=\frac{1}{3 \cdot 2^{m+1}}.
        \end{eqnarray*}

    Finally, we have that
        \begin{eqnarray*}
        ||u_n-u_{n-1}|| & \leq & \mathcal{K}_n ||u_{n-1}-u_{n-2}||,
        \end{eqnarray*}
    with $\mathcal{K}_n < \frac{1}{2}.$\\

\textbf{Case II: }
    When $||u_{n-1}|| <  1$, we can establish the following inequality
        \begin{eqnarray*}
        ||u_{n-1}||^4 -\frac{||u_{n-1}||^2}{3 \cdot 2^{m+1}} + \frac{||f||}{2^{m+1}} & \leq &
        ||u_{n-1}||^3 -\frac{||u_{n-1}||}{2^{m+1}C_1} +\frac{||f||}{2^{m+1}}.
        \end{eqnarray*}
    Then, from (\ref{6*}), we have
        \begin{eqnarray}
        & ||u_{n-1}||^4 -\frac{||u_{n-1}||^2}{3 \cdot 2^{m+1}} + \frac{||f||}{2^{m+1}} & \leq \,
        0 \nonumber \\
        & & \nonumber \\
        \Leftrightarrow & \left( ||u_{n-1}||^2  - \frac{1}{6 \cdot 2^{m+1}}
        \right)^2 & \leq \, \frac{1}{36 \cdot 2^{2m+2}}-\frac{||f||}{2^{m+1}}.
        \label{10*}
        \end{eqnarray}

    Again, if $||f|| \leq \frac{1}{36 \cdot 2^{m+1}}$ then
        \begin{eqnarray*}
        \left| ||u_{n-1}||^2  - \frac{1}{6 \cdot 2^{m+1}} \right| & \leq & W,
        \end{eqnarray*}
    where $W=\sqrt{\frac{1}{36 \cdot 2^{2m+2}}-\frac{||f||}{2^{m+1}}}$.

    As a consequence,
        $$\begin{array}{cccc}
        &  \frac{1}{6 \cdot 2^{m+1}}-W & \leq \,  ||u_{n-1}||^2 &         \leq \, \frac{1}{6 \cdot 2^{m+1}}+W \\
        & & & \\
        \Leftrightarrow & \sqrt{\frac{1}{6 \cdot 2^{m+1}}-W} & \leq \,  ||u_{n-1}|| &         \leq \, \sqrt{\frac{1}{6 \cdot 2^{m+1}}+W}
        \end{array}$$ leads to $||u_n|| \leq ||u_{n-1}||.$

    Again, considering now the case of $||u_{n-1}|| \leq
    \sqrt{\frac{1}{6 \cdot 2^{m+1}}-W}$, we obtain
        \begin{eqnarray*}
        ||u_{n}|| & \leq \, 2^{m+1}C_1 \left(\sqrt{\frac{1}{6 \cdot 2^{m+1}} - W}\right)^3 + C_1 ||f|| & \leq \, \sqrt{\frac{1}{6 \cdot 2^{m+1}} - W }
        \end{eqnarray*}
    and
        $$\begin{array}{c}
        ||u_{n-1}|| \leq \sqrt{\frac{1}{6 \cdot 2^{m+1}} - W }
         \qquad  ||u_{n-2}|| \leq \sqrt{\frac{1}{6 \cdot 2^{m+1}} - W }
        -W \\ \\
        ||u_{n-1}-u_{n-2}|| \leq 2\sqrt{\frac{1}{6 \cdot 2^{m+1}} - W}.
        \end{array}$$

    With the previous relations we can estimate the value of $\mathcal{K}_n$
        \begin{eqnarray}
        \mathcal{K}_n & = & 2^{m+1}\left( ||u_{n-1}||^2 + ||u_{n-2}|| ||u_{n-1}-u_{n-2}||
        \right) \nonumber \\
        & \leq & 2^{m+1} \left[ \left( \frac{1}{6 \cdot 2^{m+1}} - W
        \right) + 2 \left( \frac{1}{6 \cdot 2^{m+1}} - W
        \right) \right]\nonumber  \\
        & = & 3 \cdot 2^{m+1}\left( \frac{1}{6 \cdot 2^{m+1}} - W
        \right) \nonumber \\
        & = & \frac{1}{2} - 3 \cdot 2^{m+1} W  < \frac{1}{2}, \label{11*}
        \end{eqnarray}
    which implies that
        \begin{eqnarray*}
        ||u_{n-2}|| & \leq & R:=\frac{1}{3 \cdot 2^{m+1}}.
        \end{eqnarray*}

    Finally, we have that
        \begin{eqnarray*}
        ||u_n-u_{n-1}|| & \leq & \mathcal{K}_n ||u_{n-1}-u_{n-2}||,
        \end{eqnarray*}
    with $\mathcal{K}_n < \frac{1}{2}.$\\

The application of Banach's fixed point, to the previous
conclusions, results in the following theorem
\begin{theo}
The problem (\ref{Eq:2}) has a unique solution $u \in \,
W_2^{2,1}(\Omega)$ if $f \in L_2(\Omega)$ satisfies the condition
    \begin{eqnarray*}
    ||f|| & \leq & \frac{1}{36 \cdot 2^{m+1}}.
    \end{eqnarray*}
Moreover, our iteration method (\ref{3*}) converges for each
starting point\linebreak $u_0 \in \stackrel{\circ}{W}_2^{1,1}(\Omega)$ such
that
    \begin{eqnarray*}
    ||u_0|| & \leq & \frac{1}{6 \cdot 2^{m+1}} + W,
    \end{eqnarray*}
with $W = \sqrt{\frac{1}{36 \cdot 2^{2(m+1)}}-\frac{||f||}{2^{m+1}}}$.
\end{theo}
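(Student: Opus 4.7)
The plan is to recast the iteration (\ref{3*}) as a fixed point problem for the nonlinear operator $\mathcal{T}(u) = -T_- Q_- T_-(M(u))$ acting on a suitable closed ball in the Banach space $\stackrel{\circ}{W}_2^{1,1}(\Omega)$, and then invoke Banach's fixed point theorem. The two ingredients to verify are: (i) self-mapping of some closed ball $B_R$, and (ii) contractivity of $\mathcal{T}$ on that ball. Essentially all the hard work has been carried out in the estimates (\ref{4*})--(\ref{11*}) preceding the statement; my task is to organize these into the hypotheses of Banach's theorem.

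First I would fix the ball. The condition $\|f\| \leq \tfrac{1}{36 \cdot 2^{m+1}}$ ensures that the quantity $W = \sqrt{\tfrac{1}{36 \cdot 2^{2(m+1)}}-\tfrac{\|f\|}{2^{m+1}}}$ is real and nonnegative, and hence the inequality (\ref{6*}) admits nontrivial solutions in $\|u_{n-1}\|$. I would take the closed ball $B_R \subset \stackrel{\circ}{W}_2^{1,1}(\Omega)$ of radius $R = \tfrac{1}{6 \cdot 2^{m+1}} + W$ in the $L_2$-norm. The self-mapping property then follows from (\ref{5*}): the cubic inequality (\ref{6*}) was precisely designed so that whenever $\|u_{n-1}\| \leq R$, the next iterate satisfies $\|u_n\| \leq \|u_{n-1}\| \leq R$. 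The two cases $\|u_{n-1}\| \geq 1$ and $\|u_{n-1}\| < 1$ treated above cover every regime, so $\mathcal{T}(B_R) \subseteq B_R$.

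Next I would establish contractivity. Starting from (\ref{4*}), namely $\|u_n - u_{n-1}\| \leq C_1 \mathcal{K}_n \|u_{n-1} - u_{n-2}\|$, the computations (\ref{8*}) and (\ref{11*}) give $C_1 \mathcal{K}_n < \tfrac{1}{2}$ on $B_R$ (noting that $C_1 = \|T_-\|^2$ is absorbed into the estimate by the choice of radius). Thus $\mathcal{T}$ is a strict contraction on $B_R$ with Lipschitz constant at most $\tfrac{1}{2}$. Banach's fixed point theorem then yields a unique fixed point $u \in B_R$, and the iteration (\ref{3*}) converges to $u$ from any starting point $u_0 \in B_R$, which is precisely the hypothesis $\|u_0\| \leq \tfrac{1}{6 \cdot 2^{m+1}} + W$ in the statement.

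Finally I would address regularity. The fixed point $u$ satisfies $u = -T_- Q_- T_- (M(u))$. Since $M(u) = |u|^2 u - f \in L_2(\Omega)$ (as $u \in L_2 \cap B_R$ has $|u|^2 u \in L_2$ thanks to the smallness bound and the Clifford estimate $|ab|\leq 2^m |a||b|$), and since the Teodorescu operator $T_-$ is a right inverse of $D_{x,-it}$ mapping $L_2$ into $W_2^1$, two successive applications of $T_-$ combined with the projector $Q_-$ place $u$ in $W_2^{2,1}(\Omega)$. Applying $D_{x,-it}^2 = -\Delta - i\partial_t$ to both sides of the fixed point equation and using $D_{x,-it} T_- = I$ recovers the original PDE (\ref{Eq:2}), and the homogeneous boundary values are built into the use of $\stackrel{\circ}{W}_2^{1,1}(\Omega)$. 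The main obstacle in this plan is purely bookkeeping: one must ensure that the auxiliary polynomial manipulations in (\ref{7*})--(\ref{10*}) correctly align the ball radius $R$ with both the self-mapping estimate (\ref{5*}) and the contractivity estimate involving $\mathcal{K}_n$, since these are tuned to slightly different quantities depending on whether $\|u_{n-1}\|$ exceeds $1$; however both cases produce $\mathcal{K}_n < \tfrac{1}{2}$, so consolidating them into a single statement about $B_R$ is clean.
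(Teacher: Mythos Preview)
Your proposal is essentially the paper's own argument: the paper simply declares that Banach's fixed point theorem applied to the estimates (\ref{4*})--(\ref{11*}) yields the theorem, and you have organized those estimates into exactly the self-mapping plus contractivity framework, adding a useful regularity paragraph that the paper leaves implicit. One refinement worth making: the bounds (\ref{8*}) and (\ref{11*}) establish $\mathcal{K}_n<\tfrac12$ only once the iterates lie in the \emph{inner} region $\|u\|\leq\tfrac{1}{6\cdot2^{m+1}}-W$ (respectively $\|u\|^2\leq\tfrac{1}{6\cdot2^{m+1}}-W$), while on the outer annulus up to radius $R=\tfrac{1}{6\cdot2^{m+1}}+W$ the paper only shows the monotone decrease $\|u_n\|\leq\|u_{n-1}\|$; so the argument is really two-stage (descent into the inner ball, then contraction there) rather than a direct contraction on all of $B_R$ as your phrasing suggests, though the conclusion is unaffected.
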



{\bf Acknowledgement} {\it The second author wishes to express his
gratitude to {\it Funda\c c\~ao para a Ci\^encia e a Tecnologia} for
the support of his work via the grant {\tt SFRH/BSAB/495/2005}.}



\end{document}